\newcommand{\C}{\mathcal{C}}
\newcommand{\F}{\mathcal{F}}
\newcommand{\Pa}{\mathcal{P}}
\newcommand{\R}{\mathbb{R}}
\newcommand{\rh}{r_{\mathcal{P}}(\mathcal{C})}
\newtheorem{mydef}{Definition}
\newtheorem{myth}{Theorem}
\newtheorem{lemma}{Lemma}
\newtheorem{cor}{Corollary}
\newtheorem{example}{Example}
\newtheorem*{ex1cont*}{Example 1 - cont}
\begin{document}

\title{Dynamically Consistent Objective and Subjective Rationality\thanks{%
\textit{Contacts}: \textit{lorenzo.bastianello@u-paris2.fr},  \textit{josehf@insper.edu.br}, \textit{anactros1@insper.edu.br}. 
We thank A. Chateauneuf, F. Maccheroni, N. Takeoka, P. Wakker and audiences at Insper, Lemma and at the TUS VI conference in Paris School of Economics. 
Santos gratefully acknowledges financial support from Fapesp (Grant No.
2017/09955-4 and 2018/00215-0). Faro is also grateful for financial support from the CNPq-Brazil (Grant No. 308183/2019-3).}}
\author{Lorenzo Bastianello$^a$, Jos\'{e} Heleno Faro$^b$,
Ana Santos$^b$ \\
$^{a}${\scriptsize Universit\'e Paris 2 Panthe\'eon-Assas, 4 rue
Blaise Desgoffe, 75006 Paris, France}\\
$^{b}${\scriptsize Insper, Rua Quat\'{a} 300, Vila Ol\'{\i}mpia
04546-042 S\~{a}o Paulo, Brazil}}
\date{April 2020
}
\maketitle

\begin{abstract}

 A group of experts, for instance climate scientists, is to choose among two policies $f$ and $g$. Consider the following decision rule. If all experts agree that the expected utility of $f$ is higher than the expected utility of $g$, the unanimity rule applies, and $f$ is chosen. Otherwise the precautionary principle is implemented and the policy yielding the highest minimal expected utility is chosen.  

This decision rule may lead to time inconsistencies when an intermediate period of partial resolution of uncertainty is added. We provide axioms that enlarge the initial group of experts with veto power, which leads to a set of probabilistic beliefs that is  ``rectangular'' in a minimal sense. This makes this decision rule dynamically consistent and provides, as a byproduct, a novel behavioral characterization of rectangularity. 

\medskip

{\sc Keywords:\/} Ambiguity Aversion, dynamic consistency, objective rationality, subjective rationality, full Bayesian updating, rectangularity.

{\sc JEL Classification Numbers:\/} D81, D83, D84.
\end{abstract}

\section{Introduction}\label{sec:intro}

Consider a board of Bayesian experts that needs to guide choices of a Decision Maker (DM henceforth) facing alternatives with uncertain outcomes. One can think for instance of a group of climate scientists that should advise the European Union about the best policy in order to fight CO$_2$ emissions. Another concrete and recent example may be the one of epidemiologists advising a prime minister about the best policy to fight the Coronavirus outbreak in 2020.
In both cases, it is reasonable to think that different experts assign different probability distributions to possible scenarios. For instance, in the beginning of March 2020, due to the lack of data and high uncertainty about the number of infected people, some epidemiologists may have assigned a high probability to a pandemic scenario with several hundred thousand of contaminated people by the Coronavirus in all Europe, while at the same time others may have thought that such a scenario was not as likely to happen.

Suppose that there are two policies $f$ and $g$ under consideration. For instance $f$ is lockdown the population and $g$ is herd immunity. If all experts think that $f$ is better than $g$,\footnote{When we say ``$f$ better than $g$'' we mean that $f$ has an higher expected utility than $g$.} then it may be reasonable for the DM to prefer policy $f$ rather than $g$. This decision rule is sometimes referred to as \textit{unanimity principle}. Note that all experts in the group have a veto power: it is sufficient that one expert ranks $g$ above $f$ to break unanimity. Suppose that this is indeed the case and  that  experts disagree. Then it is not clear which policy the DM should implement. In this situation, especially when uncertainty about different scenarios is high and there are scenarios that can lead to catastrophes, several authors suggest to adopt the \textit{precautionary principle}. While there is not an accepted and universal definition of the precautionary principle, see Gardiner \cite{Gardiner}, one can think of it as saying that a policy should be evaluated through the opinion of the most pessimistic expert.

Gilboa, Maccheroni, Marinacci,  and Schmeidler \cite{GMMS} (GMMS henceforth) offer an axiomatic analysis supporting the use of the precautionary principle in order to ``complete'' the unanimity rule when full agreement among experts does not hold.
However, if we introduce an intermediary period of partial resolution of uncertainty, i.e. if experts know today that they will have some information tomorrow, this decision rule may violate dynamic consistency. This means that decisions taken today may be regretted tomorrow once experts acquire information and are allowed to update their preferences.

This paper provides a refinement of GMMS. Loosely speaking, we show how to enlarge the group of experts in order to avoid decisions causing future regret. The main idea is that the board of experts should take into account today the structure of information describing all events that they might face tomorrow as the actual partial resolution of uncertainty.

From a decision theoretical point of view, one can identify each expert with the probability distribution she assigns to possible scenarios (under the assumption that experts share the same attitude toward risk, i.e. the same affine utility index). Hence the term expert and probability measure can be used as synonyms and a board of experts can be represented by a set of probability measures. Moreover decision rules can be thought of as  preference relations. In this paper the unanimity rule corresponds to a Bewley preference, see Bewley \cite{Bewley}, while the precautionary principle coincides with a Maxmin preference of Gilboa and Schmeidler \cite{GS89}.

We model information through an exogenous partition of the state space. The experts may disagree about the probability of any event in the partition but they must agree that all events in the partition have a positive probability. The DM knows today that tomorrow she will learn in which element of the partition lies the true state.  Our main result says that the unanimity rule should be modified in order to make the set of probabilities rectangular with respect to the information partition, in the sense of Sarin and Wakker \cite{SW} and Epstein and Schneider \cite{ES03}. This is equivalent to expanding the set of probabilities by considering all new probabilities formed by specific convex combinations of the Bayesian update (with respect to the elements of the partition) of experts' opinions.

 We provide two axioms, Coherence and Prudence, that characterize how the group of experts should be enlarged in order to obtain a dynamic consistent decision rule. These axioms are imposed on two Bewley preference relations, $\succsim^*$ and $\succsim^{**}$ (and on their generalized Bayesian updates $\succsim^*_E$ and $\succsim^{**}_E$), that correspond to the original and to the enlarged group of experts respectively. 
Coherence  is divided into two parts: Ex-Ante and Ex-Post Coherence. Ex-Ante Coherence says, loosely speaking, that the (new) unconditional preference $\succsim^{**}$ should agree with $\succsim^*$ on the likelihood of events in the information partition. Ex-Post Coherence states that the  (new) updated  preference $\succsim^{**}_E$ should agree with  $\succsim^*_E$. Therefore Coherence imposes some restrictions on how new experts should be chosen.
Prudence simply asserts that the novel unanimity rule will never reveal a ranking that is not supported by the original criteria. For the representation it implies that the new set of probability measures should contain the original one. It means that we may observe new experts added to the original group and all experts that were already there will still have veto power.
 
Finally, we require $\succsim^{**}$ to be the most incomplete preference such that the pair  $(\succsim^*~,~\succsim^{**})$ satisfies Coherence and Prudence. The idea is that we must add as many experts as possible until the corresponding precautionary completion becomes not subject to regret, maintaining the coherence and the prudence properties.  We call $\succsim^{**}$ the \textit{coherent precautionary restriction} of $\succsim^*$.
 On the one hand, extending the set of probabilities means that more opinions are taken into account. On the other hand, it implies that it is more likely that two experts disagree on the ranking of a pair of acts. Our point here is that the decrease of comparable acts (i.e. increasing incompleteness of preferences) is consistent with a DM whose behavior is driven by uncertainty aversion and regret aversion. Note that an alternative way to solve both the problems of incompleteness and time inconsistency is to pick only one expert out of the group. Since it seems that there is not an objective procedure to select the ``best expert", it seems reasonable for a DM to opt for a plurality of opinions.

Our main result states that given two Bewley preferences $\succsim^{*}$ and $\succsim^{**}$, the relation $\succsim^{**}$ is the coherent precautionary restriction of  $\succsim^{*}$ if, and only if, it is a Bewley preference represented by the same utility index on consequences and with a set of priors that is the rectangular hull of the set of probabilities characterizing the original preference $\succsim^{*}$. As a corollary, a simple application of GMMS allows us to obtain a dynamic consistent Maxmin preference as a completion of $\succsim^{**}$.\footnote{Completion means that if two acts cannot be compared by $\succsim^{**}$, then they should be compared using the derived Maxmin preference} Our results can be interpreted in the following way: if a set of expert is not rectangular and if one is willing to safely (i.e. without generating  dynamic inconsistent decisions) complete the unanimity rule by the precautionary principle, then this set should be enlarged through a ``rectangularization'' \textit{\`a la} Sarin and Wakker \cite{SW} and Epstein and Schneider \cite{ES03}. Rectangularization implies that in order to ensure dynamic consistency is enough to consider some convex combinations of the Bayesian updates of the original probabilities.

The rest of the paper is organized as follows. Section \ref{sec:framework} introduces the necessary notations. Section \ref{sec:GMMS} recalls the results of GMMS and provides an example of dynamic inconsistency. Section \ref{sec:main} contains our axioms and main results. Section \ref{sec:concludes} concludes.


\section{Framework and Notation}\label{sec:framework}

Consider a set $S$ of \emph{states of the world}, endowed with a $\sigma $-algebra $\Sigma $ of subsets called \emph{events}, and a non-empty set $X$ of \emph{consequences}. 
 We say that a function $f:S\rightarrow X$ is \textit{simple} if $f(S):=\{f(s):s\in S\}$ is a finite set. A simple function $f$ is $\Sigma $-measurable if $\{s\in
S:f(s)=x\}\in \Sigma $ for all $x\in X$. We denote by $\mathcal{F}$ the set of all simple and $\Sigma $-measurable functions.

We assume that set of consequences $X$ is a convex subset of a
vector space. For instance, this is the case if $X$ is the set of all simple lotteries on a set of outcomes $Z$. In fact, it is the classic setting of Anscombe and Aumann \cite{AA} as re-stated by Fishburn \cite{fish70}. Using the linear structure of $X$, we can define as usual for every $f,g\in \mathcal{F}$ and $\alpha \in \lbrack 0,1]$ the act $\alpha f+(1-\alpha)g:S\rightarrow X$ by 
\begin{equation*}
(\alpha f+(1-\alpha )g)(s)=\alpha f(s)+(1-\alpha )g(s).
\end{equation*}
Also, given two acts $f,g\in \mathcal{F}$ and an event $E\in \Sigma $, we denote by $fEg$ the act delivering the consequences $f\left( s\right) $ in $E $ and $g\left( s\right) $ in $E^{c}:=S\backslash E$ (the complement of $E$).

We denote by $B_{0}(\Sigma )$ the set of all simple real-valued $\Sigma$-measurable functions $a:S\rightarrow \mathbb{R}$. The norm in $B_{0}(\Sigma)$ is given by $\left\Vert a\right\Vert _{\infty }=\sup_{s\in S}\left\vert a(s)\right\vert $ (called the \textit{sup norm}) and $B\left( \Sigma \right)$ will denote the supnorm closure of $B_{0}\left( \Sigma \right) $. In
another way, $B_{0}(\Sigma )$ is the vector generated by the indicator functions of the elements of $\Sigma $, endowed with the supnorm (for more details, see Dunford and Schwarts \cite{DS}, Section 5 of Chapter IV). We denote by $ba(\Sigma )$ the Banach space of all finitely additive set functions on $\Sigma $ endowed with the total variation norm. It is
isometrically isomorphic to the norm dual of $B_{0}(\Sigma )$. Note also that the weak$^{\ast }$ topology $\sigma (ba,B_{0})$ of $ba(\Sigma )$ coincides with the eventwise convergence topology. Throughout the paper, we assume that any subset of $ba(\Sigma )$ is endowed with the topology inherited from the weak$^{\ast }$ topology.

Given a mapping $u:X\rightarrow \mathbb{R}$, function $u(f):S\rightarrow  \mathbb{R}$ is defined by $u(f)(s)=u(f(s)),\,$for\thinspace all\thinspace $s\in S$. We note that $u(f)\in B_{0}(\Sigma )$ whenever $f$ belongs to $\mathcal{F}$. Let $x$ be a consequence in $X$, abusing notation we define $x\in \mathcal{F}$ to be the constant act such that $x(s)=x$ for all $s\in S.$
Hence, we can identify $X$ with the set of constant acts in $\mathcal{F}$. We say that a function $u:X\rightarrow \mathbb{R}$ is \textit{affine} if  for every $f,g\in \mathcal{F}$ and $\alpha \in \lbrack 0,1]$, $u(\alpha f+(1-\alpha)g)=\alpha u(f)+(1-\alpha)u(g)$. Affine functions $u:X\rightarrow \mathbb{R}$ are called \textit{utility functions}.

We denote by $\Delta \left( S,\Sigma \right) :=\Delta $ the set of all (finitely additive) probability measures $P:\Sigma \rightarrow \lbrack 0,1]$. Given an act $f\in \mathcal{F}$, a utility index $u$, and a probability measure $P\in \Delta $, the expected utility of $f$ is denoted by $\int u\left( f\right) dP$. Consider an event $E\in \Sigma$ and a probability $P\in \Delta$ such that $P(E)>0$. The \textit{Bayesian update} of $P$ with respect to $E$ is $P^{E}(A)=\frac{P(A\cap E)}{P(E)}$. Let $\C\subseteq\Delta$ and $E\in \Sigma$ such that $P(E)>0$ for all $P\in \C$, then the set $\mathcal{C}^{E}$ denotes the set of \textit{prior-by-prior Bayesian updates} of $\C$ given $E$, i.e. $\mathcal{C}^{E}=\{P^{E}|P\in \mathcal{C}\}$. We also say that $\C$ is updated following the \textit{full Bayesian rule}.

A \textit{preference relation} $\succsim \subseteq \mathcal{F}\times \mathcal{F}$ is a binary relation that satisfies reflexivity, transitivity (\textit{preorder}), continuity and non-triviality.  \textit{Continuity} means that for all $f,g,h\in \F$ the sets $\{\lambda\in[0,1]| \lambda f +(1-\lambda)g\succsim (\precsim) h\}$ are closed in $[0,1]$. \textit{Non-triviality} means that $\succsim$ has a non-empty strict part. As usual, the strict and weak parts of $\succsim$ are denoted $\succ$ and $\sim$ respectively.

\section{Objective and Subjective Rationality and Dynamic Consistency}\label{sec:GMMS}

This section discusses  the interplay between the unanimity rule and the precautionary principle and the role of dynamic consistency. Section \ref{subsec:GMMS} recalls the axiomatic analysis of GMMS in which it is studied how to complete a Bewley preference with a Maxmin preference. Section  \ref{subsec:DC} introduces an intermediate period of partial resolution of uncertainty and provides an Ellsberg-type example in which dynamic inconsistencies may arise.

\subsection{Completion of a Bewley preference by a Maxmin preference}\label{subsec:GMMS}

In the context of social decisions, the unanimity principle postulates that society should prefer $f$ to $g$ if every individual prefers $f$ to $g$. Consider a group of individuals (applying unanimity) in which each member has Subjective Expected Utility (SEU) preferences with possibly different probability distributions and utility functions. Danan \emph{et al.} \cite{Dananetal} proposed normative (Pareto) principles in order to  aggregate individual preferences into a unanimity rule in which  individuals' utilities capturing tastes are combined \textit{\`a la} Harsanyi \cite{Harsanyi} into one (social) utility function on consequences. This means that it is as if society is represented by one DM with a Bewley \cite{Bewley} preference in which the set of probability distributions is characterized by the convex hull of the subjective priors of the members.\footnote{%
This can be viewed as an application of Theorem 2 of Danan \emph{et al.} \cite{Dananetal}. Their results are actually more general as each preference of the group members can be itself a Bewley preference.}

Formally, let $u:X\rightarrow \R$ be a utility function and  $\mathcal{C}\subseteq \Delta $  be a  nonempty, convex, and $\sigma (ba,B_{0})$-compact subset of $\Delta$. We say that $\succsim^*$ is a \textit{Bewley  preference} represented by  $(u,\mathcal{C})$ if for all $f,g\in \F$
\begin{equation}\label{eq:Bewley}
f\succsim^* g \Leftrightarrow \int u(f)dP\geq \int u(g)dP\,\, \forall P\in \C, 
\end{equation}
The criterion (\ref{eq:Bewley}), axiomatized by Bewley \cite{Bewley}, says that $f$ is preferred to $g$ with respect to the preference $\succsim^*$ if and only if the expected utility of $f$ is higher than the expected utility of $g$ according to every probability $P\in \C$. If every probability distribution in $\C$ represents the opinion of an expert then $f$ is better than $g$ if and only if every expert ranks $f$ above $g$. This justifies the name unanimity rule: experts should all agree.

In general this decision rule is incomplete, i.e. does not rank every pair $f,g\in \F$.  It may happen that there are $P_1,P_2\in \C$ such that $\int u(f)dP_1> \int u(g)dP_1$ and $\int u(g)dP_2>\int u(f)dP_2$.\footnote{Note that all probabilities, and hence all experts have the same importance. One can generalize this decision rule assigning different weights to different experts as in Faro \cite{Faro15}.} If two acts cannot be compared, but a decision must be taken, then several authors have suggested to use the precautionary principle, or Maxmin, see for instance Gardiner \cite{Gardiner}. This principle states that $f$ is better than $g$ if and only if the minimum expected utility of $f$ is higher that the minimum expected utility of $g$, where the minimum is considered over all probabilities in a set $\C$.

Put formally, if $u:X\rightarrow \R$ is a utility function and  $\mathcal{C}\subseteq \Delta $  a  nonempty, convex, and $\sigma (ba,B_{0})$-compact subset of $\Delta$, then $\succsim^\# $ is a \textit{Maxmin preference} represented by  $(u,\mathcal{C})$ if for all $f,g\in \F$
\begin{equation}\label{eq:MaxMin}
f\succsim^\# g \Leftrightarrow \min_{P\in \C}\int u(f)dP\geq \min_{P\in \C} \int u(g)dP.
\end{equation}
Maxmin has been introduced by  Wald \cite{Wald} in statistical decision theory and it has been axiomatized in our framework by Gilboa and Schmeidler \cite{GS89}. It is easy to see that the preference relation $\succsim^\#$ represented by  (\ref{eq:MaxMin}) is  complete, i.e. it allows to compare every $f$ and $g$.

GMMS  analyze the interplay between a Bewley preference  $\succsim^*$ and a complete  preference $\succsim^\#$  (note that  in our definition of preference relation in Section \ref{sec:framework} completeness is not required). In GMMS,  $\succsim^*$ represents the \textit{objective rationality} of the DM: the DM can convince others that $f$ is better than $g$ in an uncontroversial way. The preference $\succsim^\#$ represents the \textit{subjective rationality} of the DM: the DM cannot be convinced of being wrong choosing $f$ rather than $g$ (the DM does not feel embarrassed after her choice).  GMMS study under which conditions $\succsim^\#$ is a Maxmin preference that can be used to compare acts that are not comparable with respect to $\succsim^*$. They impose two axioms on the couple $(\succsim^*,\succsim^\#)$.

\medskip
\noindent  \textsc{Consistency} For all $f,g\in \F$, $f\succsim^* g $ implies $f\succsim^\# g$
\medskip

\noindent  \textsc{Default to Certainty} For all $f\in \F$ and $x\in X$, if not $f\succsim^* x$ then $x\succ^\# f$.
\medskip

Consistency says that the decision rule $\succsim^\#$ agrees with $\succsim^*$ whenever acts are comparable. In the spirit of GMMS, if it is objectively rational to prefer $f$ to $g$, then it is subjectively rational too. Consistency implies that the preference $\succsim^\#$ is a \textit{completion} of $\succsim^*$, meaning that $\succsim^\#$ is complete and $\succsim^*\subseteq \succsim^\#$.  Default to certainty says that if there is not a unanimous agreement about the comparison between act $f\in \F$ and $x\in X$, then $\succsim^\#$ should rank the constant $x$ above the uncertain act $f$.

\begin{myth}\label{th:GMMS}\textsc{[GMMS, Theorem 4]}
Let $\succsim ^*$ be a Bewley preference represented by $(u,\mathcal{C})$ and let $\succsim^\#$ be a  complete preference relation. Then:
\begin{enumerate}
\item[(i)] The pair $(\succsim^*,\succsim^\#)$ jointly satisfies Consistency and Default to Certainty;
\item[(ii)]  $\succsim^\#$ is a Maxmin preference represented by $(u,\mathcal{C})$.
\end{enumerate}
\end{myth}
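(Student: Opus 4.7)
The theorem is a biconditional. The direction (ii)$\Rightarrow$(i) is straightforward: if $\succsim^\#$ is Maxmin with $(u,\C)$, then $f\succsim^* g$ means $\int u(f)dP\geq \int u(g)dP$ for every $P\in\C$, hence $\min_{P\in\C}\int u(f)dP\geq \min_{P\in\C}\int u(g)dP$, giving Consistency; and if not $f\succsim^* x$, there is $P^*\in\C$ with $\int u(f)dP^*<u(x)$, so $\min_{P\in\C}\int u(f)dP<u(x)=\min_{P\in\C}\int u(x)dP$, giving $x\succ^\# f$ and thus Default to Certainty.

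For the substantive direction (i)$\Rightarrow$(ii), my plan is to produce a certainty equivalent for every $f\in\F$ and then reduce the ranking under $\succsim^\#$ to a comparison of the associated utility numbers. Define $c(f):=\min_{P\in\C}\int u(f)dP$, which is attained because $\C$ is weak$^{*}$-compact and $P\mapsto\int u(f)dP$ is weak$^{*}$-continuous. A brief preliminary using non-triviality of $\succsim^*$ (to exhibit two consequences of distinct utilities) together with convexity of $X$ and affinity of $u$ shows that $u(X)$ is an interval containing $c(f)$, so I may pick $x_f\in X$ with $u(x_f)=c(f)$. Granted the core claim $f\sim^\# x_f$, completeness of $\succsim^\#$ reduces $f\succsim^\# g$ to $x_f\succsim^\# x_g$; and combining Consistency (which transfers the utility order from $\succsim^*$ to $\succsim^\#$ on constants) with Default to Certainty (applied to a constant act versus a constant consequence) yields strict monotonicity in $u$ on $X$, so $x_f\succsim^\# x_g \Leftrightarrow u(x_f)\geq u(x_g) \Leftrightarrow c(f)\geq c(g)$, which is the Maxmin representation.

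The heart of the argument is thus the certainty-equivalent claim $f\sim^\# x_f$. One inequality is immediate: since $\int u(f)dP\geq c(f)=\int u(x_f)dP$ for every $P\in\C$, Consistency gives $f\succsim^\# x_f$. For the reverse, I would argue by contradiction. Suppose $f\succ^\# x_f$. Pick any $y\in X$ with $u(y)>c(f)$ and, for $\lambda\in[0,1]$, set $y_\lambda:=\lambda y+(1-\lambda)x_f$, so that $u(y_\lambda)=\lambda u(y)+(1-\lambda)c(f)>c(f)$ for every $\lambda>0$. Then by the definition of $c(f)$ there exists $P^*\in\C$ with $\int u(f)dP^*<u(y_\lambda)$, hence $f\not\succsim^* y_\lambda$, and Default to Certainty yields $y_\lambda\succ^\# f$. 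Consequently $\{\lambda\in[0,1]:y_\lambda\succsim^\# f\}\supseteq(0,1]$; continuity of $\succsim^\#$ applied to the mixture $\lambda y+(1-\lambda)x_f$ with third point $f$ forces this set to be closed, hence to contain $0$, giving $x_f\succsim^\# f$ and contradicting $f\succ^\# x_f$.

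The main obstacle is exactly this limiting step. Default to Certainty only supplies strict witnesses at utility levels strictly above $c(f)$, while the target certainty equivalent $x_f$ sits precisely at the boundary; bridging this gap is possible only by invoking the mixture-continuity axiom of $\succsim^\#$, and any weaker topological substitute would fail. A secondary technical point is ensuring that some $x_f\in X$ with $u(x_f)=c(f)$ actually exists, which is a routine consequence of affinity of $u$, convexity of $X$, and non-triviality, but is worth isolating as a lemma before running the contradiction argument above.
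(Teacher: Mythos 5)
The paper itself contains no proof of this statement: it is imported verbatim from Gilboa, Maccheroni, Marinacci and Schmeidler (their Theorem 4), so there is no in-paper argument to compare against. Your proof is, in substance, the standard GMMS-style argument and it is correct: the easy direction is routine, and for the substantive direction you build a certainty equivalent $x_f$ with $u(x_f)=c(f):=\min_{P\in\mathcal{C}}\int u(f)\,dP$ (well defined since $c(f)$ is an average of finitely many values of $u\circ f$ and $u(X)$ is convex, and the minimum is attained by weak$^{*}$-compactness), get $f\succsim^{\#}x_f$ from Consistency, and get the reverse by approaching $x_f$ from above with mixtures $y_\lambda$, invoking Default to Certainty at each $\lambda>0$ and then mixture continuity of $\succsim^{\#}$ at $\lambda=0$. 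Your worry about the limiting step is unfounded in this framework: the paper's definition of a preference relation already includes mixture continuity, so $\{\lambda : \lambda y+(1-\lambda)x_f\succsim^{\#}f\}$ is closed and your argument goes through; monotonicity on constants from Consistency plus Default to Certainty then yields $f\succsim^{\#}g\Leftrightarrow c(f)\geq c(g)$.

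One small gap to patch: you ``pick any $y\in X$ with $u(y)>c(f)$,'' but such $y$ need not exist, namely when $c(f)=\max u(X)$. In that corner case, however, $u(f(s))\leq c(f)$ for all $s$ together with $\int u(f)\,dP\geq c(f)$ forces $\int u(f)\,dP=c(f)=u(x_f)$ for every $P\in\mathcal{C}$, so $f\sim^{*}x_f$ and Consistency gives $f\sim^{\#}x_f$ directly; add this case distinction (and note that non-triviality is not what delivers $c(f)\in u(X)$ --- the convex-combination argument does). With that amendment your proof is complete and coincides in spirit with the original GMMS proof.
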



\subsection{Dynamic (in)Consistency}\label{subsec:DC}

The framework of Section \ref{subsec:GMMS} is static. It does not take into consideration how a DM would react to new information that could be obtained over time. Let us add an intermediate period of partial resolution of uncertainty.

Let $\mathcal{P}=\{E_{1},\dots ,E_{n}\}$ denote a finite partition of measurable events of $S$, i.e., $E_{1},...,E_{n}\in \Sigma $, $S=\cup_{i=1}^n E_i$ and $E_i\cap E_j=\emptyset$ for $i\neq j$. $\mathcal{P}$ denotes the structure of information, i.e. the DM knows today that tomorrow she will learn that $s\in E_i$ for some $i=1,\dots, n$. 

Consider a DM with a (unconditional or ex-ante) preference relation $\succsim$ over $\F$. Given $E\in \mathcal{P}$, we call $\succsim_E$ the \textit{conditional (or ex-post) preference} given $E$. It is interpreted as the preference of the DM once she knows $s\in E$. 
 
The following axiom, called \textit{Dynamic Consistency} is well known in the literature and  relates the unconditional preference $\succsim$ to the conditional preference $\succsim_E$.

\medskip
\noindent  \textsc{Dynamic Consistency} For all $f,g\in \F$ and $E\in \mathcal{P}$, $f\succsim_E g \Leftrightarrow fEg\succsim g$
\medskip

The axiom says that $f$ is better than $g$ conditional on $E$ if and only if whenever one replaces act $g$ with act $f$ on $E$, the resulting act $fEg$ is better than $g$. We refer the reader to Ghirardato \cite{Ghiro02} for a detailed interpretation of this property.

A second standard axiom well known in the literature is \textit{Consequentialism}. It says that a DM should not be concerned about the consequences of an act in states that are known not to occur. We denote $f|_E$  the restriction of act $f$ on event $E$.

\medskip
\noindent  \textsc{Consequentialism} For all $f,g\in \F$ and $E\in \mathcal{P}$, if $f|_E=g|_E$, then $f\sim_E g$. 
\medskip

\noindent As shown in Faro and Lefort \cite{FlF}, if the couple of preferences $(\succsim, \succsim_E)$ satisfies Dynamic Consistency, then $\succsim_E$ satisfies Consequentialism.

Let $\succsim^*$ be a  Bewley preference relation represented by $(u,\C)$. Consider an even $E\in \Sigma$ such that $P(E)>0$ for all $P\in\C$. Then Ghirardato, Maccheroni and Marinacci \cite{GMM08} prove that Dynamic Consistency and Consequentialism are equivalent of $\succsim^*_E$ being represented by $(u,\C^E)$, where $\C^E$ is the set obtained from $\C$ by the full Bayesian updating rule, i.e. $\C^E=\{P^E|P\in \C\}$ and $P^E(\cdot)=\frac{P(\cdot\,\cap E)}{P(E)}$. The issue about the equivalence between dynamic consistency and the prior-by-prior updating rule in Bewley's model was also previously discussed by Bewley \cite{Bewley_Part2} and Epstein and Le Breton \cite {Epstein_LeBreton}. 

Note that the conditional preference $\succsim^*_E$ is in general incomplete. If one is willing to directly use Theorem \ref{th:GMMS} in order to complete it with a Maxmin preference $\succsim^\#_E$ represented by $(u,\C^E)$, then possible dynamic inconsistency may arise. Consider the following dynamic Ellsberg \cite{Ellsberg} example.\footnote{This example is inspired by Example 1 in Ghirardato \textit{et al.} \cite{GMM08}. Ghirardato \textit{et al.} \cite{GMM08} acknowledge  that they owe the example to Denis Bouyssou.}

\begin{example}\label{ex:DinC}
One of the main summary statistics in climate change science is the equilibrium climate sensitivity. It denotes the equilibrium increase in global mean temperatures that would occur if the concentration of atmospheric $\text{CO}_2$ were doubled. Uncertainty about estimates of this parameter is high. See Figure 1 in  Meinshausen \textit{el al.} \cite{Nature} in which are plotted estimated probability density functions for climate sensitivity from several published studies. Consider the following simple example. There are three possible future scenarios: low, medium and high climate sensitivity. A group of experts, with the same attitude toward risk (i.e. same $u(\cdot)$), knows that the probability $P(low)=\frac{1}{3}$, while they have no information about $P(medium)$ and $P(high)$. Therefore experts can be identified with the set
$$
\C=\left\{P=(p_1,p_2,p_3)\in \Delta| p_1=\frac{1}{3}, p_2\in\left[0,\frac{2}{3}\right]\right\}.
$$
This example is formally equivalent to an Ellsberg urn containing 90 balls, 30 of which are red, while the remaining 60 are either blue or green. Let us denote low, medium and high by $R$, $B$ and $G$ respectively.  Suppose that a government has to choose between the two policies in the table below. 
\begin{center}
\begin{tabular}{c|ccc}	  
	 &\textcolor{red}{Red} & \textcolor{blue}{Blue} & \textcolor{green}{Green} 	\\ \hline
  $f$ & 10 &  0 & 10  \\ 
   $g$ & 0 &  10 & 10  \\  \hline  
\end{tabular}
\end{center}
Note that $f$ and $g$ are not comparable with respect to a Bewley preference represented by $(u,\C)$. Consider for instance $P_1=\left(\frac{1}{3}, 0, \frac{2}{3}\right)$ and $P_2=\left(\frac{1}{3}, \frac{2}{3}, 0\right)$. Suppose w.l.g. $u(0)=0$, then $\int u(f)dP_1=u(10)>\frac{2}{3}u(10)=\int u(g)dP_1$ and $\int u(f)dP_2=\frac{1}{3} u(10)<\frac{2}{3}u(10)=\int u(g)dP_2$. If we use  a Maxmin preference in order to compare the two acts we get
$$
\min_{P\in \C}\int u(f)dP=\frac{1}{3}u(10)<\frac{2}{3}u(10)=\min_{P\in \C}\int u(g)dP
$$
i.e. $g\succ^\# f$.

Let us add an intermediary period of partial resolution of uncertainty. Information is modeled through the partition
$$
\Pa=\{G,RB\}.
$$
Suppose that the experts update the probabilities in $\C$ with the full Bayesian rule.
In our example one has $\C^G=\{(0,0,1)\}$, i.e. experts know that the true state is $G$,\footnote{For sake of simplicity we are considering full Bayesian updating even if there is $P\in\C$ with $P(G)=0$. It is easy to modify the example to avoid this possibility (for instance choosing $p_2\in [0,2/3)$ in $\C$).} 
and $\C^{RB}=\{(p_1,p_2,0)\in \Delta|p_1\in\left[\frac{1}{3},1\right] \}$. This results from $P^{RB}(R)=\frac{P(R)}{P(R)+P(B)}=\frac{\frac{1}{3}}{\frac{1}{3}+P(B)}$.

By Consequentialism we obtain $f\sim^*_G g$, but again $f$ and $g$ are not comparable with respect the Bewley preference $\succsim^*_{RB}$ represented by $(u,\C^{RB})$ (consider for instance $Q_1=(1,0,0)$ and $Q_2=\left(\frac{1}{3}, \frac{2}{3}, 0\right)$).  
Suppose that one wants to complete $\succsim^*_{RB}$ with a Maxmin preference $\succsim^{\#}_{RB}$ represented by $(u,\C^{RB})$. Then 
$$
\min_{P\in \C^{RB}}\int u(f)dP=\frac{1}{3}u(10)>0=\min_{P\in \C^{RB}}\int u(g)dP
$$
i.e. $f\succ^\#_{RB} g$. 
Note however that $g RB f~=~g\succ^\# f$ would imply $g\succ_{RB}^\# f$ if Dynamic Consistency were to hold.  In general using prior-by-prior updating and then applying Theorem~\ref{th:GMMS} violates Dynamic Consistency. 
\end{example}

An axiomatization for the full Bayesian updating in the model of objective and subjective rationality has been proposed by Faro and Lefort \cite{FlF} in a framework without the inclusion of a partition in the primitives. In their work, dynamic inconsistencies are allowed and interpreted as a product of what they call forced choices (decisions that must be made and based only on subjective grounds). In the perspective of this paper, Dynamic Consistency is viewed as a property of preferences fundamentally related to rationality. If a preference relation is not dynamic consistent then we may have $f\succ g$ but $g\succ_E f$ and $g\succ_{E^c} f$.\footnote{Note that in Example \ref{ex:DinC} 
we got $g\succ^\# f$ but $f\succ^\#_{RB} g$ and $f\sim^\#_{G} g$
 (by Consequentialism).}  This means that for dynamic inconsistent preferences, decisions taken today, i.e. choosing $f$ over $g$, may be regretted tomorrow, i.e. the conditional preference will rank $g$ above $f$ no matter if $s\in E$ or $s\in E^c$. We think therefore that it is reasonable to require Dynamic Consistency for  subjectively rational preferences (defined in the spirit of GMMS). 
 
It is well known that a Maxmin preference relation $\succsim^\#$  represented by $(u,\C)$ is not dynamically consistent in general. Epstein and Schneider \cite{ES03} prove that  Dynamic Consistency holds if and only if the set of priors $\C$ is rectangular. In Section \ref{sec:main} we show  how the unanimity rule $\succsim^*$ should be revised in order to achieve a rectangular set and therefore Dynamic Consistency for the derived Maxmin preference relation.

\section{Axioms and main result}\label{sec:main}

Let  $\succsim ^{\ast }$ be a Bewley preference represented by $(u,\mathcal{C})$. As shown in Example \ref{ex:DinC}, if the ex-post preference $\succsim ^{\ast }_E$  represented by $(u,\mathcal{C}^E)$  is completed by the corresponding Maxmin preference  represented by $(u,\mathcal{C}^E)$, dynamic inconsistencies may arise. We study here how $\succsim ^{\ast }$ should be transformed into a ``new'' Bewley preference $\succsim ^{\ast \ast }$ represented by $(\hat{u},\hat{\mathcal{C}})$ in order to avoid those inconsistencies. Without loss of generality, we denote $x_0\in X$ an outcome such that $u(x_0)=\hat{u}(x_0)=0$. 

Given a finite partition $\mathcal{P}=\{E_{1},\cdots
,E_{n}\}\subseteq \Sigma $, such that $P(E_i)>0$ for all $i \in \{1, \cdots, n\}$ and $P \in {\mathcal{C}}$, we impose the following axioms on the pairs of preferences $(\succsim ^{\ast },\succsim ^{\ast \ast })$ and $(\succsim ^{\ast }_E,\succsim ^{\ast \ast }_E)_{E\in \Pa}$.

\medskip
\noindent  \textsc{Coherence} :
\begin{itemize}
\item[(i)]  \textsc{Ex-Ante Coherence}.  For all $x,y\in X$, for all $ E \in \mathcal{P}$,  $y\succsim^* xE x_0\Rightarrow y\succsim^{**} xEx_0$ and $xEx_0 \succsim^* y\Rightarrow x E x_0\succsim^{**} y$.

\item[(ii)]  \textsc{Ex-Post Coherence}. For all $f,g\in \mathcal{F}$, for all $E\in \mathcal{P}$, $f\succsim_E^* g\Rightarrow f\succsim^{**}_E g$.
\end{itemize}

\medskip
\noindent  \textsc{Prudence.} For all $f,g\in \mathcal{F}$, $f\succsim^{**}g\Rightarrow f\succsim^* g$.
\medskip

Coherence imposes some restrictions on the new preference relation $\succsim^{**}$. Note that we can interpret an act $xE x_0$ as a bet over $E$. For instance $y\succsim^* xE x_0$ means that receiving $u(y)$ for sure is preferred to the bet $[u(x),P(E);0,1-P(E)]$ for all $P\in \C$.   Ex-Ante Coherence says that $\succsim^{**}$ should rank bets over events in $\Pa$ as $\succsim^{*}$.   Ex-Post Coherence simply says that the new conditional preference relation $\succsim^{**}_E$ should agree with  $\succsim^{*}_E$ for all $E\in \Pa$.  Coherence disciplines the relation between the sets   $\C$ and $\hat{\mathcal{C}}$ of ex-ante probabilities and the sets   $\C^E$ and $\hat{\mathcal{C}}^E$ of conditional probabilities. 

To give an interpretation of Prudence, let us interpret  $\C$ and $\hat{\mathcal{C}}$ as sets of (probabilistic) opinions of a group of experts.  Prudence essentially says that we want to ``add'' experts to $\C$. This will make  $\succsim^{**}$ more incomplete than $\succsim^{*}$, but will insure that more opinions are taken into account. In the terminology of Ghirardato, Maccheroni and Marinacci \cite{GMM04}, $\succsim^{**}$ reveals more ambiguity than $\succsim^{*}$. From their paper it is known that Prudence implies $u=\hat{u}$ and $\C\subseteq\hat{\mathcal{C}}$.

The following definition makes precise how the Bewley preference  $\succsim^{**}$ is built upon $\succsim^{*}$. 

\begin{mydef}\label{def:CohPrecRev}
Given a finite partition $\mathcal{P}=\{E_{1},\cdots ,E_{n}\}\subseteq \Sigma $, such that $P(E_i)>0$ for all $i \in \{1, \cdots, n\}$ and $P \in {\mathcal{C}}$, we say $\succsim ^{\ast \ast }$ is the \emph{coherent precautionary restriction} (w.r.t. $\mathcal{P}$) of $\succsim ^{\ast }$ if $\succsim ^{\ast \ast }$ is the most incomplete Bewley preference such that the pair $(\succsim ^{\ast },\succsim ^{\ast \ast })$ satisfies Coherence and Prudence.
\end{mydef}

Definition \ref{def:CohPrecRev} says that not only we  want the pair $(\succsim ^{\ast },\succsim ^{\ast \ast })$ to satisfy Coherence and Prudence, but also we  require $\succsim ^{\ast \ast }$ to be the maximal restriction of $\succsim ^{\ast }$ that obeys such properties (recall that $\succsim_1$ is a \textit{restriction} of  $\succsim_2$ if $\succsim_1\subseteq \succsim_2$. ). 
Loosely speaking, if we do not impose this condition, the relation $\succsim ^{\ast \ast }$ will be subject to the same issues of dynamic inconsistencies as the original preference $\succsim ^{\ast }$ once we take the corresponding completion with a Maxmin representation.

We can now state our main result.

\begin{myth}
\label{th:main_th}  Given a finite partition $\mathcal{P}%
=\{E_{1},\cdots ,E_{n}\}\subseteq \Sigma $, the following assertions are
equivalent:

\begin{enumerate}
\item[(i)] The preference $\succsim ^{\ast \ast }$ is the coherent precautionary restriction of $\succsim ^{\ast }$;

\item[(ii)] For all $f,g\in \mathcal{F}$, 
\begin{multline}\label{eq:BewleyDC}
f\succsim ^{\ast \ast }g \Leftrightarrow \\
 \sum_{i=1}^{n}P_{0}(E_{i})\int u(f)dP_{i}^{E_{i}}\geq \sum_{i=1}^{n}P_{0}(E_{i})\int u(g)dP_{i}^{E_{i}} \text{, }
\forall P_{0},P_{1},\dots ,P_{n}\in \mathcal{C} 
\end{multline}
\end{enumerate}
\end{myth}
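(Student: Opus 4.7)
My plan is to identify the prior set of $\succsim^{**}$ in (ii) as the rectangular hull of $\mathcal{C}$ with respect to $\mathcal{P}$,
\[
\hat{\mathcal{C}} = \left\{Q \in \Delta : Q = \sum_{i=1}^{n} P_0(E_i) P_i^{E_i} \text{ for some } P_0, P_1, \ldots, P_n \in \mathcal{C}\right\},
\]
and observe that (ii) is precisely the Bewley representation with prior set $\hat{\mathcal{C}}$ (taking closed convex hull if needed). The proof then reduces to showing that the Bewley preference represented by $(u, \hat{\mathcal{C}})$ is the coherent precautionary restriction of $\succsim^*$.

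For the direction (ii) $\Rightarrow$ (i), I first verify the three axioms for $\succsim^{**}$ represented by $(u, \hat{\mathcal{C}})$. Prudence ($\mathcal{C} \subseteq \hat{\mathcal{C}}$) is immediate from the law of total probability by setting $P_0 = P_1 = \cdots = P_n = P$ for each $P \in \mathcal{C}$. Ex-Post Coherence follows from the direct computation $Q^{E_j} = P_j^{E_j}$ for every $Q = \sum_i P_0(E_i) P_i^{E_i} \in \hat{\mathcal{C}}$ (because $P_i^{E_i}(E_j) = \delta_{ij}$), which gives $(\hat{\mathcal{C}})^{E_j} = \mathcal{C}^{E_j}$ and in fact $\succsim^{**}_{E_j} = \succsim^*_{E_j}$, strictly stronger than what the axiom demands. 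Ex-Ante Coherence follows from $Q(E_j) = P_0(E_j)$: the partition marginals of $\hat{\mathcal{C}}$ coincide with those of $\mathcal{C}$, so bets on elements of $\mathcal{P}$ are ranked identically by $\succsim^*$ and $\succsim^{**}$.

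For the maximality step, I take any Bewley preference $\succsim'$ represented by $(u, \mathcal{C}')$ such that the pair $(\succsim^*, \succsim')$ satisfies Coherence and Prudence, and I show $\mathcal{C}' \subseteq \hat{\mathcal{C}}$. Fix $Q \in \mathcal{C}'$. For Bewley preferences, Ex-Post Coherence translates to $(\mathcal{C}')^{E_j} \subseteq \mathcal{C}^{E_j}$ (larger set of priors corresponds to a more incomplete relation), while Prudence yields the reverse inclusion; hence $(\mathcal{C}')^{E_j} = \mathcal{C}^{E_j}$ and in particular $Q^{E_j} = P_j^{E_j}$ for some $P_j \in \mathcal{C}$. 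Analogously, Ex-Ante Coherence applied to bets $xE_jx_0$ combined with Prudence forces the partition marginal $(Q(E_1), \ldots, Q(E_n))$ to coincide with $(P_0(E_1), \ldots, P_0(E_n))$ for some $P_0 \in \mathcal{C}$. Using the decomposition $Q = \sum_i Q(E_i) Q^{E_i}$, we conclude $Q = \sum_i P_0(E_i) P_i^{E_i} \in \hat{\mathcal{C}}$. The reverse direction (i) $\Rightarrow$ (ii) is then immediate from the uniqueness of the maximal preference satisfying the axioms.

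The main obstacle I anticipate is the maximality step, and within it the delicate upgrade from the individual marginal equalities that Ex-Ante Coherence yields for each single $E_j$ separately to the full joint marginal agreement on $\mathcal{P}$. This is where the rectangular structure emerges and will require careful use of both directions of the implication in Ex-Ante Coherence together with Prudence, leveraging the Bewley independence structure to propagate the constraints across partition elements.
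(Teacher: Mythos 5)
Your strategy coincides with the paper's own: translate Prudence, Ex-Post and Ex-Ante Coherence into conditions on the prior sets, verify that the Bewley preference represented by $(u,r_{\mathcal{P}}(\mathcal{C}))$ satisfies them, and then argue that $r_{\mathcal{P}}(\mathcal{C})$ is the maximal compatible set (the paper does this via its Lemma 1, which characterizes the axioms by $\hat{u}=u$, $\mathcal{C}\subseteq\hat{\mathcal{C}}$, $\hat{\mathcal{C}}^{E}=\mathcal{C}^{E}$ and marginal agreement on $\mathcal{P}$, and its Lemma 2, which states that $r_{\mathcal{P}}(\mathcal{C})$ is the largest set with the last two properties). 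Your verification that the rectangular hull satisfies the three axioms (the computations $Q^{E_j}=P_j^{E_j}$ and $Q(E_j)=P_0(E_j)$, plus $\mathcal{C}\subseteq r_{\mathcal{P}}(\mathcal{C})$) is correct and matches the paper's.

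The problem is the maximality step, and you have put your finger on exactly where it breaks: the claim that Ex-Ante Coherence together with Prudence forces the whole marginal vector $(Q(E_1),\dots,Q(E_n))$ of any $Q\in\mathcal{C}'$ to be attained by a \emph{single} $P_0\in\mathcal{C}$ is asserted, not proved, and you defer it as the ``main obstacle.'' This is a genuine gap, not a routine detail. Ex-Ante Coherence only involves bets $xE_jx_0$ on one cell at a time against constants, so for a Bewley pair it delivers only the cell-by-cell statement $Q(E_j)\in\{P(E_j):P\in\mathcal{C}\}=[\min_{P\in\mathcal{C}}P(E_j),\max_{P\in\mathcal{C}}P(E_j)]$, with the matching prior allowed to depend on $E_j$. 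For $n\geq 3$ this is strictly weaker than joint agreement: take $S=\{1,2,3\}$, $\mathcal{P}$ the singleton partition and $\mathcal{C}$ the convex hull of $(.5,.3,.2)$, $(.2,.5,.3)$, $(.3,.2,.5)$; then $(.4,.4,.2)$ respects every per-cell range but equals no element of $\mathcal{C}$, and the Bewley relation whose prior set is the box of per-cell ranges intersected with the simplex satisfies Prudence and both parts of Coherence while being strictly more incomplete than the one represented by $r_{\mathcal{P}}(\mathcal{C})$ (which here is just $\mathcal{C}$). So the ``delicate upgrade'' cannot be obtained merely by combining the two directions of Ex-Ante Coherence with Prudence and the Bewley structure; some further input is needed. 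You should also be aware that the paper's own proof of its Lemma 1(iv) stops at the same point: its contradiction hypothesis (``there exist $Q$ and $E$ with $Q(E)\neq P(E)$ for all $P\in\mathcal{C}$'') negates only the cell-by-cell property, so it too establishes only the per-cell version rather than the joint condition that Lemma 2 and the theorem subsequently use. As written, therefore, your proposal reproduces the paper's architecture but leaves the maximality half of the equivalence unproven.
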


Expression (\ref{eq:BewleyDC}) in Theorem \ref{th:main_th} tells how acts should be evaluated by the coherent precautionary restriction. First one should fix $n+1$ probabilities  in $P_{0},P_{1},\dots ,P_{n}\in \mathcal{C}$, i.e. $n+1$ experts should be chosen. Each probability $P_i$, $i=1,\dots,n$ should be assigned to a set in the partition $\Pa$ (for simplicity we denote $P_i$ the probability assigned to set $E_i$). Then for act $f$ the quantity $\sum_{i=1}^{n}P_{0}(E_{i})\int u(f)dP_{i}^{E_{i}}$ should be computed. Expression $\int u(f)dP_{i}^{E_{i}}$ is the expected utility of $f$ calculated through the  Bayesian update  with respect to $E_i$ of the corresponding probability $P_i$. Then expected utilities are aggregated through a convex combinations in which weights are given by $P_0(E_i)$. To summarize: 
\begin{equation}\label{eq:EUrect}
\underbrace{\sum_{i=1}^{n}P_{0}(E_{i})}_{\mathclap{\text{cvx comb.}}}\underbrace{
\vphantom{ \sum_{i=1}^{n}}
\int u(f)dP_{i}^{E_{i}}}_{\substack{\text{EU of } f \text{ w.r.t. } \\ \mathclap{\text{ update }P_i^{E_{i}}}}}
\end{equation}
The amount in formula (\ref{eq:EUrect}) should be calculated for all possible $n+1$ choices of probabilities in $\C$, for both acts $f$ and $g$. If for all choices of probabilities the value obtained for $f$ is higher than the one obtained for $g$, then $f\succsim ^{\ast \ast }g$. This result can be viewed as a prescriptive way about how to aggregate opinions. A Bewley preference $\succsim^*$ represented by $(u,\C)$ should be revised in the following way: first, compute the Bayesian update for all the events in the partition and all members $P\in \C$; second, compute the expected value under these conditional probabilities; and third, take convex combinations using as weights the opinions of the members on likelihood of events in $\Pa$. Therefore  (\ref{eq:BewleyDC}) is a  decision criterion in which new experts acquire veto power. These new experts are ``constructed'' from the old ones precisely as we just described.

Given a finite partition $\mathcal{P}=\{E_{1},\cdots ,E_{n}\}\subseteq \Sigma $, it is well known  that a probability $Q$ can be written as $Q=\sum_{i=1}^{n}P_{0}(E_{i})P_i^{E_{i}}$ for some $P_{0},P_{1},\dots ,P_{n}\in \mathcal{C}$  if and only if $Q$ is in the rectangular hull of $\C$. This notion is formalized in the Definition \ref{def:recthull}, based on the previous contributions of Sarin and Wakker \cite{SW}, Epstein and
Schneider \cite{ES03} and Ghirardato \textit{et al.} \cite{GMM08}. 

\begin{mydef}\label{def:recthull}
Given a partition $\mathcal{P}=\{E_{1},\cdots ,E_{n}\}\subseteq \Sigma $, the rectangular hull of a set of priors $\mathcal{C}\subseteq \Delta $ is
given by 
\begin{equation*}
r_{\mathcal{P}}(\mathcal{C}):=\left\{ \sum_{i=1}^{n}P_0(E_{i})\cdot
P_{i}^{E_{i}}\text{: }P_0,P_{1},\cdots ,P_{n}\in \mathcal{C}\right\} 
\text{.}
\end{equation*}
We say that a set $\mathcal{C}\subseteq \Delta $ is rectangular
(w.r.t. $\mathcal{P}$) when $\mathcal{C=}r_{\mathcal{P}}(\mathcal{C})$.
\end{mydef}

The rectangular hull of the set $\mathcal{C}$ for a partition $\mathcal{P}$ is obtained by considering all convex combinations   of the conditional probabilities (conditioned using Bayesian updating on events in $\Pa$) with weights given by the unconditional probabilities. The link with (\ref{eq:EUrect}) should be evident. Given Definition \ref{def:recthull} and Theorem \ref{th:main_th} the following corollary is immediate.

\begin{cor}\label{cor:rh}
Item $(ii)$ of Theorem \ref{th:main_th} is equivalent to 
\begin{enumerate}
\item[(iii)]  For all $f,g\in \mathcal{F}$, 
\end{enumerate}
\begin{equation}
f\succsim ^{\ast \ast }g \Leftrightarrow  \int u(f)dQ\geq \int u(g)dQ\text{, }\forall Q\in r_{\mathcal{P}}(\mathcal{C})
\end{equation}
\end{cor}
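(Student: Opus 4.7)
The plan is to show that Corollary \ref{cor:rh} is essentially a direct consequence of the linearity of the Lebesgue integral combined with the very definition of the rectangular hull. Given the hypothesis that $P(E_i) > 0$ for every $P \in \mathcal{C}$ and every $i$, the prior-by-prior updates $P_i^{E_i}$ are well defined, hence every element of $r_{\mathcal{P}}(\mathcal{C})$ is genuinely a probability measure in $\Delta$.

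First I would establish the key identity: for any $P_0, P_1, \dots, P_n \in \mathcal{C}$, define
\[
Q := \sum_{i=1}^{n} P_0(E_i) \, P_i^{E_i}.
\]
Since the $P_0(E_i)$ are nonnegative scalars summing to one and each $P_i^{E_i}$ is a probability, $Q \in \Delta$. By linearity of the integral with respect to the measure,
\[
\int u(f) \, dQ \;=\; \sum_{i=1}^{n} P_0(E_i) \int u(f) \, dP_i^{E_i},
\]
for every $f \in \mathcal{F}$ (and similarly for $g$). This identity is the bridge between the two formulations.

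Next I would argue the two implications. For $(ii) \Rightarrow (iii)$, take an arbitrary $Q \in r_{\mathcal{P}}(\mathcal{C})$; by Definition \ref{def:recthull} there exist $P_0, P_1, \dots, P_n \in \mathcal{C}$ such that $Q = \sum_{i=1}^{n} P_0(E_i) P_i^{E_i}$, and then the key identity converts the inequality in $(iii)$ into one of the inequalities quantified in $(ii)$. Conversely, for $(iii) \Rightarrow (ii)$, fix any $P_0, P_1, \dots, P_n \in \mathcal{C}$; the measure $Q = \sum_{i=1}^{n} P_0(E_i) P_i^{E_i}$ lies in $r_{\mathcal{P}}(\mathcal{C})$ by definition, and the same identity rewrites the inequality in $(ii)$ as the one demanded in $(iii)$ for this particular $Q$.

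There is no real obstacle here: the statement is a reformulation rather than a substantive new theorem, and the only point worth a line of verification is that $Q$ is indeed a probability measure (so that the Bewley-style quantifier makes sense) and that the integral identity holds with finite sums, which is immediate for simple $\Sigma$-measurable functions $u(f) \in B_0(\Sigma)$. Thus the equivalence of $(ii)$ and $(iii)$ follows immediately once the identity above is recorded, and combined with Theorem \ref{th:main_th} it yields the corollary.
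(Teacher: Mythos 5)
Your proposal is correct and matches the paper's intent: the paper treats this corollary as immediate from Definition \ref{def:recthull}, and your argument simply makes explicit the underlying identity $\int u(f)\,dQ=\sum_{i=1}^{n}P_0(E_i)\int u(f)\,dP_i^{E_i}$ (linearity of the integral in the measure) together with the matching of quantifiers over $(P_0,P_1,\dots,P_n)\in\mathcal{C}^{n+1}$ versus $Q\in r_{\mathcal{P}}(\mathcal{C})$. Nothing is missing.
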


Obviously $\C\subseteq  r_{\mathcal{P}}(\mathcal{C})$, which reflects the fact that the coherent precautionary restriction  $\succsim ^{\ast \ast }$ is a sub-relation of  $\succsim ^{\ast }$. This comes from Prudence and from Definition \ref{def:CohPrecRev}. If we want to complete $\succsim ^{\ast \ast }$ as we did in Section \ref{subsec:GMMS}, an application of Theorem \ref{th:GMMS} yields the following result.

\begin{cor}\label{cor:DC_GMMS}
Assume that $\succsim ^{\ast }$ is a Bewley preference represented by $(u,\mathcal{C})$. Given a finite partition $%
\mathcal{P}=\{E_{1},\cdots ,E_{n}\}\subseteq \Sigma $, let $\succsim ^{\ast \ast}$ be the coherent precautionary restriction of$\ \succsim ^{\ast }$ and assume that $\succsim ^{\#\#}$ is a complete preference relation. The following are
equivalent:
\begin{enumerate}
\item[(i)] The pair $(\succsim^{**},\succsim^{\#\#})$ jointly satisfies Consistency and Default to Certainty;
\item[(ii)]  $\succsim ^{\#\#}$ is a Maxmin preference represented by $(u,r_{\mathcal{P}}(\mathcal{C}))$.
\end{enumerate}
Moreover Dynamic Consistency is satisfied  and for all $E\in \mathcal{P}$, $\succsim _{E}^{\#\#}$ is a Maxmin preference represented by $(u,r_{\mathcal{P}}(\mathcal{C})^{E})$.
\end{cor}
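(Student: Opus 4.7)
The plan is to prove Corollary \ref{cor:DC_GMMS} in three stages. For the equivalence $(i)\Leftrightarrow(ii)$, I would apply Theorem \ref{th:GMMS} directly to the pair $(\succsim^{**},\succsim^{\#\#})$. By Corollary \ref{cor:rh}, $\succsim^{**}$ is a Bewley preference represented by $(u,r_{\mathcal{P}}(\mathcal{C}))$, so I only need to verify that $r_{\mathcal{P}}(\mathcal{C})$ is nonempty, convex, and weak*-compact. Compactness follows from writing $r_{\mathcal{P}}(\mathcal{C})$ as the continuous image of the compact product $\mathcal{C}^{n+1}$ under $(P_0,\ldots,P_n)\mapsto\sum_i P_0(E_i)P_i^{E_i}$, invoking weak*-continuity of evaluation at $E_i$ and of Bayesian updating (the latter by the assumption $P(E_i)>0$ for every $P\in\mathcal{C}$). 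Convexity reduces to that of $\mathcal{C}^{E_i}$, itself a consequence of $\mathcal{C}$ being convex: varying the weight $\lambda\in[0,1]$ in $\lambda P+(1-\lambda)P'\in\mathcal{C}$ spans every convex combination of $P^{E_i}$ and $(P')^{E_i}$.

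Next, I would establish that $r_{\mathcal{P}}(\mathcal{C})$ is rectangular, i.e.\ $r_{\mathcal{P}}(r_{\mathcal{P}}(\mathcal{C}))=r_{\mathcal{P}}(\mathcal{C})$. The inclusion $\supseteq$ is trivial. For $\subseteq$, take $Q=\sum_i Q_0(E_i)Q_i^{E_i}\in r_{\mathcal{P}}(r_{\mathcal{P}}(\mathcal{C}))$ and expand each $Q_j=\sum_k R^{(j)}_0(E_k)(R^{(j)}_k)^{E_k}\in r_{\mathcal{P}}(\mathcal{C})$ with $R^{(j)}_{\bullet}\in\mathcal{C}$. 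The key observation is that each $(R^{(j)}_k)^{E_k}$ is supported on $E_k$, which forces $Q_j(E_i)=R^{(j)}_0(E_i)$ and $Q_i^{E_i}=(R^{(i)}_i)^{E_i}$. Substituting yields $Q=\sum_i R^{(0)}_0(E_i)(R^{(i)}_i)^{E_i}\in r_{\mathcal{P}}(\mathcal{C})$, upon setting $P_0:=R^{(0)}_0$ and $P_i:=R^{(i)}_i$.

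With rectangularity in hand, I would invoke Epstein and Schneider \cite{ES03}: their main result states that a Maxmin preference with a rectangular set of priors satisfies Dynamic Consistency and its conditional on $E\in\mathcal{P}$ is the Maxmin preference with priors $r_{\mathcal{P}}(\mathcal{C})^E$. This delivers both conclusions of the ``moreover'' clause. Alternatively, the same facts follow directly from the recursive identity
\[
\min_{Q\in r_{\mathcal{P}}(\mathcal{C})}\int u(f)\,dQ = \min_{P_0\in\mathcal{C}}\sum_i P_0(E_i)\min_{P_i\in\mathcal{C}}\int u(f)\,dP_i^{E_i},
\]
which decouples the outer minimization over the $\mathcal{P}$-marginal from the conditional minimizations on each cell. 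The main obstacle is the careful bookkeeping in the rectangularity argument of Stage 2; once that collapse of the double sum is made explicit and the structural checks of Stage 1 are in place, the corollary falls out of Theorem \ref{th:GMMS} and Epstein--Schneider.
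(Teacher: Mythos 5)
Your proposal is correct and follows essentially the route the paper itself takes: Corollary \ref{cor:rh} identifies $\succsim^{**}$ as a Bewley preference with prior set $r_{\mathcal{P}}(\mathcal{C})$, Theorem \ref{th:GMMS} applied to the pair $(\succsim^{**},\succsim^{\#\#})$ gives the equivalence of (i) and (ii), and the ``moreover'' clause is obtained from Epstein--Schneider (rectangularity of the priors implies Dynamic Consistency and prior-by-prior updating of the Maxmin representation). The only difference is that you explicitly verify what the paper leaves implicit --- nonemptiness, convexity and weak$^*$-compactness of $r_{\mathcal{P}}(\mathcal{C})$, and its rectangularity $r_{\mathcal{P}}(r_{\mathcal{P}}(\mathcal{C}))=r_{\mathcal{P}}(\mathcal{C})$ --- and these checks are correct.
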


The last sentence of Corollary \ref{cor:DC_GMMS} says that  Dynamic Consistency is satisfied. This result is derived from Epstein and Schneider \cite{ES03}, see also Amarante and Siniscalchi \cite{AS19}.\footnote{For a recent application of Dynamic Consistency and rectangularity in a Maxmin model with imprecise probabilistic information, see Riedl, Tallon and Vergopoulos \cite{RTV}, in which the authors extend to dynamic settings the model of Gajdos \textit{et al.} \cite{Gaj}.} As we argued in Section \ref{subsec:DC}, Dynamic Consistency is an important property that a subjectively rational preference should satisfy.  Note that Corollary \ref{cor:DC_GMMS} together with Theorem \ref{th:main_th} give a new behavioral characterization of rectangularity and Dynamic Consistency. We conclude this section with two examples. The first one solves the dynamic inconsistency of Example \ref{ex:DinC} applying our results. The second one shows how the introduction of an information partition may increase the perceived ambiguity.

\begin{ex1cont*}\label{ex:DinC2}
In the Ellsberg example we get that $\bar{P}\in r_{\Pa}(\C)$ if and only if there are $P_0,P_1,P_2\in \C$ such that
\begin{align*}
\bar{P}(R) &= P_0(G)P_1^G(R)+(1-P_0(G))P_1^{RB}(R)=0+(1-P_0(G))P_1^{RB}(R)\\
\bar{P}(B) &= P_0(G)P_1^G(B)+(1-P_0(G))P_1^{RB}(B)=0+(1-P_0(G))P_1^{RB}(B)\\
\bar{P}(G) &= P_0(G)P_1^G(G)+(1-P_0(G))P_1^{RB}(G)=P_0(G)\\
\end{align*}
Therefore the coherent precautionary restriction $\succsim^{**}$ is a Bewley preference represented by $(u,r_{\Pa}(\C))$ with
$$
r_{\Pa}(\C)=\left\{(1-t)\begin{pmatrix}0\\ 0\\1\end{pmatrix}+t\begin{pmatrix}p\\ 1-p\\0\end{pmatrix} | t\in\left[\frac{1}{3},1\right], p\in\left[\frac{1}{3},1\right] \right\}.
$$
Acts $f$ and $g$ are of course not comparable with respect to the Bewley preference $\succsim^{**}$. Computing the Maxmin formula obtained by Corollary \ref{cor:DC_GMMS} one gets
$$
I(g)=\min_{t\in\left[\frac{1}{3},1\right],\, p\in\left[\frac{1}{3},1\right]} t(1-p)u(10)+(1-t)u(10)=0
$$

$$
I(f)=\min_{t\in\left[\frac{1}{3},1\right],\, p\in\left[\frac{1}{3},1\right]} tpu(10)+(1-t)u(10) > 0=I(g)
$$
Hence $f\succ^{\#\#} g$ and no Dynamic Consistency problem will  arise once the set $r_{\Pa}(\C)$ is updated with the prior-by-prior Bayes rule, since $r_{\Pa}(\C)^E=\C^E$ for all $E\in \Pa$.
\end{ex1cont*}

\begin{example}\label{ex:DC_ambiguity}
Consider an Ellsberg setting as in Example \ref{ex:DinC} and the following pair of acts
\begin{center}
\begin{tabular}{c|ccc}	  
	 &\textcolor{red}{Red} & \textcolor{blue}{Blue} & \textcolor{green}{Green} 	\\ \hline
  $f'$ & 10 &  0 & 0  \\ 
   $g$ & 0 &  10 & 10  \\  \hline  
\end{tabular}
\end{center}
If $\succsim^*$ is a Bewley preference represented by $(u,\C)$ (where $\C$ is as in  Example \ref{ex:DinC}) it is easy to show that $g\succ^*f'$.  Moreover, one can notice that acts $f'$ and $g$ are not ambiguous with respect to the probability set $\C$, meaning that $\int u(f) dP=\int u(f) dQ$ for all $P,Q\in \C$. This happens because the events $R$ and $BG$ are not ambiguous.
Clearly, if  $\succsim^\#$ denotes a Maxmin preference represented by the same pair $(u,\C)$, one obtains $g\succ^\# f'$.
However, if the information partition is $\Pa=\{G,RB\}$ there is an increase in ambiguity since $\Pa$ is not aligned with the ex-ante structure of information given by $\C$. Note actually that $f'$ and $g$ are not comparable with respect to $\succsim^*_{RB}$ and moreover one has $f'\succ^\#_{RB} g$, where $\succsim^*_{RB}$ and  $\succsim^\#_{RB}$ are the ex-post Bewley and Maxmin preferences represented both by $(u,\C^{RB})$, and $\C^{RB}$ is as in  Example~\ref{ex:DinC}. This violates Dynamic Consistency. Ghirardato \textit{et al.} \cite{GMM08} note
\begin{quotation}
``Yet, it seems to us that a decision maker with the ambiguity averse preferences [$f\succ g \succ f'$] \textit{might} still prefer to bet on a red ball being extracted, finding that event less ambiguous than the extraction of a [blue] ball, and that constraining him to choose otherwise is imposing a strong constraint on the dynamics of his ambiguity attitude. \\
In view of this example, we claim that dynamic consistency is a compelling property only for comparisons of acts that are not affected by the possible presence of ambiguity. In other words, we think that rankings of acts unaffected by ambiguity should be dynamically consistent."
\end{quotation}
Our approach is complementary to the one of Ghirardato \textit{et al.} \cite{GMM08}. In order to obtain Dynamic Consistency of the (uncertainty averse) subjectively rational preference, Corollary \ref{cor:rh} says that the coherent precautionary restriction $\succsim^{**}$ of $\succsim^{*}$ should be computed. Acts $f'$ and $g$ are not comparable with respect to $\succsim^{**}$, as for instance $g$ is ranked higher than $f'$ for the probability $(\frac{1}{9},\frac{2}{9},\frac{6}{9})\in r_{\Pa}(\C)$ (take $p=\frac{1}{3}$ and $t=\frac{1}{3}$) and $f'$ is ranked higher than $g$ for the probability $(1,0,0)\in r_{\Pa}(\C)$ (take $p=1$ and $t=1$). When the completion $\succsim^{\#\#}$ (a Maxmin preference represented by $(u,r_{\Pa}(\C)))$ is used, we obtain $f'\succ^{\#\#}g$. The ex-ante ranking is therefore reversed in a way that reflects the increase in ambiguity. This implies no violations of Dynamic Consistency. 
\end{example}

\section{Conclusion}\label{sec:concludes}

The unanimity rule (or equivalently Bewley preference relation) says that a policy (or equivalently act)  $f$ is preferred to $g$ if and only if every expert (or equivalently probability measure) assigns higher expected utility to $f$ rather than $g$. If two experts disagree, this rule is unable to tell which policy is better. When a decision must be taken, several authors suggest to compare policies through the precautionary principle: the policy with the highest minimal expected utility should be chosen.

We show that this rule may generate possible dynamic inconsistencies when an intermediary period of partial resolution of uncertainty is added. This  implies  that choices made today are regretted tomorrow no matter the additional information learned. In order to avoid this problem, we provide axioms that modify the original group of experts. We derive a new unanimity rule called coherent precautionary restriction. New opinions are formed by taking convex combinations of experts' updated beliefs. This makes the completion of the new unanimity rule dynamically consistent.

\section{Appendix\protect\smallskip}

Throughout this Appendix, $P=\{E_1\dots,E_n\}$ denotes a fixed partition of $S$. Preferences $\succsim^*$ and $\succsim^{**}$ are Bewley preferences represented by $(u,\C)$ and $(\hat{u},\hat{\C})$ respectively. 

\begin{lemma}
\label{lemma:C_hat_properties} The pair $(\succsim^*,\succsim^{**})$ satisfies Coherence and Prudence if and only if

\begin{itemize}
\item[(i)] $\hat{u}=u$;

\item[(ii)] $\hat{\mathcal{C}}\supseteq \mathcal{C}$;

\item[(iii)] For all $E\in\mathcal{P}$, $\mathcal{C}^E=\hat{\mathcal{C}}^E$;

\item[(iv)] For all $Q\in \hat{\mathcal{C}}$ there exists $P\in \mathcal{C}$
s.t. $P(E)=Q(E)$, for all $E\in \mathcal{P}$.
\end{itemize}
\end{lemma}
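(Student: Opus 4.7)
The plan is to use, throughout, the standard duality for Bewley preferences sharing a common utility: $\succsim_1\subseteq\succsim_2$ if and only if $\C_2\subseteq\C_1$. This duality transfers to conditional Bewley preferences obtained by prior-by-prior updating. I will handle each implication of the iff separately and check the three behavioural properties one at a time.

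For the easy $(\Leftarrow)$ direction, assume (i)--(iv). Prudence is direct from (ii): if $f\succsim^{**}g$, the defining inequality holds on $\hat\C\supseteq\C$, hence a fortiori on $\C$, so $f\succsim^{*}g$. Ex-Post Coherence is immediate from (iii), which makes the representations of $\succsim^*_E$ and $\succsim^{**}_E$ identical for every $E\in\Pa$. For Ex-Ante Coherence, suppose $y\succsim^{*}xEx_0$, i.e., $u(y)\geq u(x)P(E)$ for all $P\in\C$; given $Q\in\hat\C$, condition (iv) yields $P\in\C$ with $P(E)=Q(E)$, so $u(x)Q(E)=u(x)P(E)\leq u(y)$, and $y\succsim^{**}xEx_0$ follows; the other implication is symmetric.

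For the $(\Rightarrow)$ direction, Prudence implies $\succsim^{**}\subseteq\succsim^{*}$, which, restricted to constants, forces $u$ and $\hat u$ to induce the same order on $X$; after the normalisation $u(x_0)=\hat u(x_0)=0$ and absorbing a positive scalar into $\hat\C$, we may take $u=\hat u$, which is (i), and then the duality gives $\C\subseteq\hat\C$, which is (ii). Ex-Post Coherence yields $\succsim^*_E\subseteq\succsim^{**}_E$, hence $\hat\C^E\subseteq\C^E$ by the duality on conditional representations; the reverse inclusion follows from (ii) by prior-by-prior updating, producing (iii). Finally, Ex-Ante Coherence together with Prudence forces the comparison between a constant $y$ and a bet $xEx_0$ to agree under $\succsim^*$ and $\succsim^{**}$; evaluating this in the Bewley representation for both signs of $u(x)\neq 0$ forces the marginal intervals $\{P(E):P\in\C\}$ and $\{Q(E):Q\in\hat\C\}$ to coincide for every $E\in\Pa$, which must then be bootstrapped to (iv).

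The main obstacle is precisely this last bootstrapping step: moving from one-dimensional agreement of the marginal intervals on each single $E\in\Pa$ to the joint statement that every $Q\in\hat\C$ has a match $P\in\C$ agreeing with it simultaneously on all of $E_1,\dots,E_n$. My approach is to decompose any $Q\in\hat\C$ as $\sum_i Q(E_i)\,Q^{E_i}$, noting $Q^{E_i}\in\hat\C^{E_i}=\C^{E_i}$ by (iii), and then to use the Bewley Independence underlying $\succsim^{**}$ in combination with Ex-Ante Coherence on carefully chosen mixtures of single-event bets to promote the individual marginal agreements into a joint-marginal agreement, thereby producing the required $P\in\C$ with $P(E_i)=Q(E_i)$ for all $i$.
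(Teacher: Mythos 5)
Your $(\Leftarrow)$ direction and your derivation of (i)--(iii) are correct and essentially the paper's route: the paper obtains (i)--(ii) from Prudence and $\mathcal{C}^E\supseteq\hat{\mathcal{C}}^E$ from Ex-Post Coherence via Proposition 6 of Ghirardato, Maccheroni and Marinacci \cite{GMM04}, and then gets (iii) by combining this with $\hat{\mathcal{C}}\supseteq\mathcal{C}$, exactly as you do. The genuine gap is the one you flag yourself: item (iv). What Ex-Ante Coherence together with Prudence actually gives you is only the cell-by-cell statement that $\{Q(E):Q\in\hat{\mathcal{C}}\}=\{P(E):P\in\mathcal{C}\}$ for each $E\in\mathcal{P}$, and you defer the passage to the joint statement (one $P\in\mathcal{C}$ matching $Q$ on \emph{all} cells simultaneously) to an unspecified ``bootstrapping'' through mixtures of single-event bets and Bewley Independence. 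That step is not a routine completion, and it cannot be carried out from the stated axioms: Ex-Ante Coherence only concerns acts of the form $xEx_0$ for a single cell $E\in\mathcal{P}$, so it only pins down, cell by cell, the maximum and minimum of $Q(E)$ over $\hat{\mathcal{C}}$; mixtures of bets on different cells are not covered by the axiom, and Independence of $\succsim^{**}$ by itself cannot import cross-cell restrictions from $\succsim^{*}$. Indeed, for $n\geq 3$ the promotion you hope for fails: take $S=\{1,\dots,6\}$ with $E_1,E_2,E_3$ pairs of states, let every prior be uniform within each cell (so all conditionals are singletons and (iii) is automatic), let $\mathcal{C}$ correspond to the segment of partition-marginal vectors from $(0.4,0.4,0.2)$ to $(0.1,0.1,0.8)$, and let $\hat{\mathcal{C}}$ correspond to the convex hull of that segment and the extra marginal vector $(0.4,0.1,0.5)$. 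The per-cell marginal ranges of the two sets coincide, so Coherence and Prudence hold, yet no $P\in\mathcal{C}$ agrees with the new prior on all three cells; only for $n=2$ do the event-wise and joint versions coincide trivially.

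It is worth noting that the point where you stopped is also where the paper's own argument is weakest: its proof of (iv) argues by contradiction from ``there exist $Q\in\hat{\mathcal{C}}$ and $E\in\mathcal{P}$ with $Q(E)\neq P(E)$ for all $P\in\mathcal{C}$,'' which is the negation of the event-wise statement you established, not of (iv) itself. So your sketch reproduces what the paper actually proves, but neither your proposed bootstrapping nor the paper's contradiction argument delivers the joint matching asserted in (iv); closing it would require either restricting to two-cell partitions or strengthening Ex-Ante Coherence to acts measurable with respect to the whole partition (for example bets built on several cells simultaneously), not cleverer use of the present axioms.
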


\begin{proof} We focus on the proof of sufficiency, as necessity can be easily proved.\\
\textit{We prove (i), (ii) and (iii)}. Using Ghirardato \textit{el al.} \cite{GMM04}, Proposition 6, we have that Prudence implies $\hat{u}=u$ and $\hat{\C}\supseteq\C$ and Ex-Post Coherence implies $\C^E\supseteq\hat{\C}^E$.  Moreover $\hat{\C}\supseteq\C$ and $\C^E\supseteq\hat{\C}^E$ imply $\C^E=\hat{\C}^E$. 

\medskip

\textit{We prove (iv)}.  Suppose by contradiction that there exist $Q\in \hat{\C}$ and $E\in \Pa$ such that $Q(E)\neq P(E)$ for all $P\in \C$. Then either $Q(E)> P(E)$ for all $P\in \C$, or $Q(E)< P(E)$ for all $P\in \C$. In fact, if for some $P_1, P_2$ we have $P_1(E)>Q(E)> P_2(E)$, then there exists $\alpha\in(0,1)$ such that $Q(E)=\alpha P_1(E)+(1-\alpha)P_2(E)$ and, by convexity of $\C$, $\alpha P_1+(1-\alpha)P_2\in \C$.\\
Suppose therefore $Q(E)> P(E)$ for all $P\in \C$ and let $\bar{P}(E)=\max_{P\in\C} P(E)$. Choose $x,y\in X$ such that $x\succ^* x_0$, $y\succ^* x_0$ and $\frac{u(x)}{u(y)}\in (\bar{P}(E),Q(E))$. Then  $\frac{u(x)}{u(y)}>\bar{P}(E)$ and this implies $u(x)>P(E)u(y)$ for all $P\in \C$, i.e. $x\succ^*yEx_0$. On the other hand $\frac{u(x)}{u(y)}<Q(E)$ implies $x\not\succsim^{**}yEx_0$, contradicting Ex-Ante Coherence.\\
Suppose now $Q(E)< P(E)$ for all $P\in \C$. Defining $\underline{P}(E)=\min_{P\in\C}P(E)$ and reasoning as before, one gets a contradiction with the second part of Ex-Ante Coherence.
\end{proof}

\begin{lemma}
\label{lemma:rh_max} $r_{\mathcal{P}}(\mathcal{C})$ is the maximal set such
that

\begin{itemize}
\item[(i)] $r_{\mathcal{P}}(\mathcal{C})^E=\mathcal{C}^E$ for all $E\in 
\mathcal{P}$;

\item[(ii)] $\forall Q\in r_{\mathcal{P}}(\mathcal{C})$, $\exists P\in 
\mathcal{C}$ such that $P(E)=Q(E)$, $\forall E\in \mathcal{P}$.
\end{itemize}
\end{lemma}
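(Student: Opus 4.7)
The plan is to prove the lemma in two stages: first verify that $r_{\mathcal{P}}(\mathcal{C})$ itself satisfies properties (i) and (ii), and then show that any set $\mathcal{D}$ satisfying (i) and (ii) must be contained in $r_{\mathcal{P}}(\mathcal{C})$. Throughout I will use the standing assumption that $P(E_i) > 0$ for every $P \in \mathcal{C}$ and every $E_i \in \mathcal{P}$, which guarantees all Bayesian updates are well defined.

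For the first stage, given $Q = \sum_{i=1}^n P_0(E_i) P_i^{E_i} \in r_{\mathcal{P}}(\mathcal{C})$, I would verify property (ii) by noting that $P_i^{E_i}$ is supported on $E_i$, so $Q(E_j) = P_0(E_j) P_j^{E_j}(E_j) = P_0(E_j)$ for each $j$; the required matching prior is $P_0 \in \mathcal{C}$. For property (i), the inclusion $\mathcal{C}^E \subseteq r_{\mathcal{P}}(\mathcal{C})^E$ follows from $\mathcal{C} \subseteq r_{\mathcal{P}}(\mathcal{C})$, which itself is a quick check (taking $P_0 = P_1 = \dots = P_n = P$ yields $\sum_i P(E_i) P^{E_i} = P$ by the law of total probability). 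For the reverse inclusion $r_{\mathcal{P}}(\mathcal{C})^{E_j} \subseteq \mathcal{C}^{E_j}$, I would compute $Q^{E_j}$ directly: since only the $i=j$ term contributes to $Q$ on subsets of $E_j$, one gets $Q^{E_j} = P_j^{E_j} \in \mathcal{C}^{E_j}$.

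For the maximality stage, let $\mathcal{D}$ be any set satisfying (i) and (ii), and fix $Q \in \mathcal{D}$. Using property (ii), pick $P_0 \in \mathcal{C}$ with $P_0(E_i) = Q(E_i)$ for all $i$; in particular $Q(E_i) > 0$, so $Q^{E_i}$ is well defined. Using property (i), for each $i$ there exists $P_i \in \mathcal{C}$ with $Q^{E_i} = P_i^{E_i}$. Then for any event $A \in \Sigma$, the law of total probability gives
\begin{equation*}
Q(A) = \sum_{i=1}^n Q(A \cap E_i) = \sum_{i=1}^n Q(E_i)\, Q^{E_i}(A) = \sum_{i=1}^n P_0(E_i)\, P_i^{E_i}(A),
\end{equation*}
which exhibits $Q$ as an element of $r_{\mathcal{P}}(\mathcal{C})$. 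Hence $\mathcal{D} \subseteq r_{\mathcal{P}}(\mathcal{C})$, completing maximality.

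I do not anticipate a serious obstacle: the decomposition via the law of total probability is the heart of the argument, and conditions (i) and (ii) have been designed so that the pieces of this decomposition (the marginal on $\mathcal{P}$ and the conditional on each $E_i$) can each be realized by an element of $\mathcal{C}$. The only delicate point is to invoke the positivity assumption on $\mathcal{C}$ before forming $Q^{E_i}$, which (ii) transfers automatically from $\mathcal{C}$ to any admissible $\mathcal{D}$.
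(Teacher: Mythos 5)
Your proposal is correct and follows essentially the same route as the paper's proof: verify (i) by computing $Q^{E_j}=P_j^{E_j}$ (only the $j$-th term of the mixture survives on $E_j$), verify (ii) via $Q(E_j)=P_0(E_j)$, and obtain maximality through the law of total probability decomposition $Q(A)=\sum_{i}Q(E_i)Q^{E_i}(A)$ with the pieces realized in $\mathcal{C}$. The only cosmetic difference is that you get $\mathcal{C}^{E}\subseteq r_{\mathcal{P}}(\mathcal{C})^{E}$ from the inclusion $\mathcal{C}\subseteq r_{\mathcal{P}}(\mathcal{C})$ rather than by plugging a given $P_j$ into an arbitrary mixture, which is an equally valid shortcut.
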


\begin{proof}
We first prove that $\rh$ satisfies conditions $(i)$ and $(ii)$ and then we show that it is the maximal set satisfying these conditions.

\medskip

\textit{$\rh$ satisfies condition $(i)$}. Fix $E_{j}\in \Pa$. We have  $P^{E_{j}}\in \rh^{E_{j}}$ if and only if there exists $P\in\rh$ such that $P^{E_{j}}(A)=\frac{P(A\cap E_j)}{P(E_j)}$ for all $A\in \Sigma$. The last assertion holds if and only if there exists $P_0,P_1,\dots,P_n\in \C$ such that 
\begin{equation}
\label{eq:rh_max}
P^{E_{j}}(A)=\frac{P(A\cap E_j)}{P(E_j)}=\frac{\sum_{i=1}^n P_0(E_i)P_i^{E_i}(A\cap E_j)}{\sum_{i=1}^n P_0(E_i)P_i^{E_i}(E_j)}=P_j^{E_j}(A\cap E_j)=P_j^{E_j}(A) 
\end{equation}
for all $A\in \Sigma$. This implies that $P^{E_{j}}\in \C^{E_j}$. \\
On the other hand if $P_j^{E_{j}}\in \C^{E_j}$ then, choosing $n+1$ probabilities $P_0,P_1,\dots,P_n\in \C$,  (\ref{eq:rh_max}) shows that  $P_j^{E_{j}}\in\rh^{E_{j}}$.

\medskip

\textit{$\rh$ satisfies condition $(ii)$}. Let $Q\in \rh$. Then there are probabilities $P_0,P_1,\dots,P_n\in \C$ such that for all $E_j\in \Pa$
$$
Q(E_j)=\sum_{i=1}^n P_0(E_i)P_i^{E_i}(E_j)=P_0(E_j)P_j^{E_j}(E_j)=P_0(E_j).
$$
Hence $P_0$ satisfies condition $(ii)$.

\medskip

\textit{$\rh$  is the maximal set satisfying conditions $(i)$ and $(ii)$}. Let $Q$ be a probability over the measurable space $(S,\Sigma)$ such that $Q^E\in \C^E$ for all $E\in \Pa$ and such that there exists $P\in \C$ such that $P(E)=Q(E)$  for all $E\in \Pa$.\\
For all $A\in \Sigma$, by the law of total probability we have $Q(A)=\sum_{i=1}^nQ(E_i)Q^{E_i}(A)$. Since $Q^{E_i}\in \C^{E_i}$ by $(i)$ there is $P_i\in \C$ such that $Q^{E_i}=P_i^{E_i}$. Moreover by condition $(ii)$ there is $P_0\in \C$ such that $Q(E_i)=P_0(E_i)$. This implies that $Q(A)=\sum_{i=1}^nP_0(E_i)P_i^{E_i}(A)$ and hence $Q\in \rh$.

\end{proof}

\medskip

\begin{proof}[\textbf{Proof of Theorem \ref{th:main_th}}]

 $\Rightarrow$ Let $\succsim^{**} $ be a Bewley preference represented by $(\hat{u},\hat{\mathcal{C}})$. Since $(\succsim^*,\succsim^{**})$ satisfies Coherence and Prudence Lemma \ref{lemma:C_hat_properties} implies that $\hat{u}=u$ and $\hat{\C}$ satisfies properties $(ii)$, $(iii)$ and $(iv)$ of Lemma \ref{lemma:C_hat_properties}. By Lemma \ref{lemma:rh_max}, $\rh$ is the maximal set satisfying  properties $(iii)$ and $(iv)$ of Lemma \ref{lemma:C_hat_properties}. By hypothesis,  $\succsim^{**} $ is the most  incomplete Bewley preference such that the pair  $(\succsim^*,\succsim^{**})$ satisfy Coherence and Prudence, hence $\hat{\C}\supseteq\rh$ and therefore $\hat{\C}=\rh$. This implies the result using Corollary \ref{cor:rh}.

\medskip

 $\Leftarrow$ By Corollary \ref{cor:rh} the representation of  $\succsim^{**} $ implies that $\forall f,g\in\F$,
$$
f \succsim^{**} g \, \Leftrightarrow \, \int u(f)dP \geq \int u(g) dP, \,\, \forall P\in \rh.
$$
It is obvious that properties $(i)$ and $(ii)$ of Lemma \ref{lemma:C_hat_properties} are satisfied. By Lemma \ref{lemma:rh_max}, $\rh$ satisfies properties $(iii)$ and $(iv)$ of Lemma \ref{lemma:C_hat_properties}. Since moreover $\rh$ is the maximal set satisfying these two properties, $\succsim^{**}$ is the most incomplete Bewley preference such that the pair  $(\succsim^*,\succsim^{**})$ satisfies Coherence and Prudence.

\end{proof}

\end{document}